\documentclass[aps,prl,twocolumn,superscriptaddress]{revtex4-2}
\usepackage{amssymb,amsmath,amsthm}
\usepackage{graphicx}
\usepackage{bm,bbm}
\usepackage[bookmarks=false]{hyperref}
\hypersetup{colorlinks=true,citecolor=blue,linkcolor=red,%
urlcolor=blue,pdfstartview=FitH,bookmarksopen=true}
\usepackage{comment}

\usepackage{algorithm}
\usepackage{algcompatible}

\newcommand{\ket}[1]{\mbox{$ | #1 \rangle $}}
\newcommand{\bra}[1]{\mbox{$ \langle #1 | $}}

\newcommand{\tr}{\mathrm{tr}}
\newcommand{\cM}{\mathcal{M}}
\newcommand{\cF}{\mathcal{F}}
\newcommand{\cC}{\mathcal{C}}

\newtheoremstyle{note}
  {\topsep/2}              	
  {\topsep/2}            	
  {}                        
  {\parindent}             	
  {\itshape}                
  {.---}                    
  {0pt}                     
  {\thmname{#1}\thmnumber{ \itshape#2}\thmnote{ (#3)}} 

\newtheorem{theorem}{Theorem}

\newtheorem{corollary}{Corollary}
\newtheorem{proposition}[theorem]{Proposition}

\theoremstyle{definition}

\theoremstyle{remark}


\begin{document}
\title{Universally Optimal Verification of Entangled States with Nondemolition Measurements}

\author{Ye-Chao Liu}
\affiliation{Key Laboratory of Advanced Optoelectronic Quantum Architecture and Measurement of
Ministry of Education, School of Physics, Beijing Institute of Technology, Beijing 100081, China}

\author{Jiangwei Shang}
\email{jiangwei.shang@bit.edu.cn}
\affiliation{Key Laboratory of Advanced Optoelectronic Quantum Architecture and Measurement of
Ministry of Education, School of Physics, Beijing Institute of Technology, Beijing 100081, China}

\author{Rui Han}
\affiliation{Centre for Quantum Technologies, National University of Singapore, Singapore 117543, Singapore}

\author{Xiangdong Zhang}
\email{zhangxd@bit.edu.cn}
\affiliation{Key Laboratory of Advanced Optoelectronic Quantum Architecture and Measurement of
Ministry of Education, School of Physics, Beijing Institute of Technology, Beijing 100081, China}

\date{\today}
%

\begin{abstract}
The efficient and reliable characterization of quantum states plays a vital role in most, if not all,
quantum information processing tasks.
In this work, we present a universally optimal protocol for verifying entangled states by employing
the so-called quantum nondemolition measurements,
such that the verification efficiency is equivalent to that of the optimal global strategy.
Instead of being probabilistic as the standard verification strategies,
our protocol is constructed sequentially, which is thus more favorable for experimental realizations.
In addition, the target states are preserved in the protocol after each
measurement, so can be reused in any subsequent tasks.
We demonstrate the power of our protocol for the optimal verification of Bell states,
arbitrary two-qubit pure states, and stabilizer states.
We also prove that our protocol is able to perform tasks including fidelity estimation
and state preparation.
\end{abstract}

\maketitle
%

\textit{Introduction.---}%
One basic yet important step in almost all quantum information processing tasks is to efficiently
and reliably characterize the quantum states.
However, the standard tool of quantum state tomography \cite{QSE2004} is typically rather time consuming
and computationally hard due to the exponentially increasing number of parameters to be reconstructed
\cite{Haffner.etal2005,Shang.etal2017}.
Thus, much attention has been drawn to the quest for nontomographic methods
\cite{Mayers.etal2004,Toth.Guehne2005c,Guehne.Toth2009,Flammia.Liu2011,Dimic.Dakic2018},
among which quantum state verification (QSV) \cite{Pallister.etal2018} particularly stands out because of
its many notable properties including its high efficiency and low cost of resources.
Up to now, a large variety of bipartite and multipartite quantum states
\cite{Pallister.etal2018, Hayashi.etal2006, Morimae.etal2017,Takeuchi.Morimae2018,
Yu.etal2019,Li.etal2019,Wang.Hayashi2019,Zhu.Hayashi2019a,
Zhu.Hayashi2019b,Liu.etal2019b,Li.etal2020b,Zhu.Hayashi2019c,
Zhu.Hayashi2019d,Zhang.etal2020a,Jiang.etal2020,Zhang.etal2020b,Li.etal2020a,Dangniam.etal2020}
can be verified efficiently or even optimally by QSV.
Very recently, efficient protocols for verifying quantum processes (including quantum gates and
quantum measurements) have also been proposed \cite{Liu.etal2020a,Zhu.Zhang2019,Zeng.etal2020}.

In short, QSV is a procedure for gaining confidence that the output of a quantum
device is a particular target state $\ket{\psi}$ over any others using local measurements.
A QSV protocol $\Omega$ takes on the general form
\begin{equation}
  \Omega = \sum_i \mu_i\Omega_i\,,
\end{equation}
where $\{\Omega_i,\openone-\Omega_i\}$ is a set of two-outcome tests carried out with probability
$\{\mu_i\}$. The projective operators $\Omega_i$s satisfy $\Omega_i\ket{\psi}=\ket{\psi}$ for all $i$.
If all $N$ states passed the test, we achieve the confidence level $1-\delta$ with
$\delta\le[1-\epsilon \nu(\Omega)]^N$,
where $\epsilon$ is the infidelity of the states and $\nu(\Omega):=1-\lambda_2(\Omega)$ denotes
the spectral gap between the largest and the second largest eigenvalues of $\Omega$
\cite{Pallister.etal2018, Zhu.Hayashi2019c}.
Hence, the protocol $\Omega$ can verify $\ket{\psi}$ to infidelity
$\epsilon$ and confidence level $1-\delta$ with the number of copies of the
states satisfying
\begin{equation}\label{eq:QSVparameter}
  N\geq\frac{\ln\delta^{-1}}{\ln\bigl\{[1-\nu(\Omega)\epsilon]^{-1}\bigr\}}\approx
  \frac1{\nu(\Omega)}\epsilon^{-1}\ln\delta^{-1}\,.
\end{equation}

Compared with tomography as well as other nontomographic methods, properly engineered QSV
protocols can greatly reduce the cost of resources.
Additionally, $\Omega_i$s are expected to be implementable
with local measurements only, thus facilitating the ease of experimental realizations.
However, there remains three major issues associated with QSV.
The first one concerns its efficiency, where an optimal strategy can rarely
(if not impossible at all) be devised.
Second, the measurements $\{\Omega_i\}$ are implemented in a probabilistic manner with
probability distribution $\{\mu_i\}$, which can be very difficult to handle in experiments since the random changes of measurement settings are error prone.
Thus, in practice, instead of choosing $\{\Omega_i\}$ randomly, they are performed in a pre-chosen sequence with the number of measurements for each setting proportional to the ratio of the probability distribution $\{\mu_i\}$ \cite{Zhang.etal2020a,Jiang.etal2020,Zhang.etal2020b}.
Such a compromise would lead to a malicious adversary cheat if one knows the sequence of measurements \cite{Zhu.Hayashi2019c,Zhu.Hayashi2019d}.
Lastly, the unknown quantum states to be characterized are destroyed after each measurement
as the system collapses at the detector, thus, cannot be reused in any subsequent tasks.
In fact, the latter two issues with QSV also exist in tomography and
other nontomographic methods.

In this work, we propose a new type of protocol to tackle \emph{all} the issues associated with QSV.
Our protocol is based on the so-called quantum nondemolition (QND) measurement
\cite{Thorne.etal1978,Braginsky.etal1980,Ralph.etal2006}, which is the type of measurement
that leaves the post-measurement quantum states undestroyed,
thus allowing repeated or sequential measurements.
We fully explore the use of sequentially constructed QND measurements for state verification
instead of the probabilistic construction as in standard QSV.
Under such a scheme, not only can we preserve the target states, but also
manage to universally reach the efficiency of the optimal global strategy.
Specifically, in order to verify the target state within infidelity $\epsilon$ and confidence
level $1-\delta$, we only need $N\approx\epsilon^{-1}\ln\delta^{-1}$ copies of the states.
In addition, our protocol is robust in the sense that the sequence of measurements can be constructed
in an arbitrary order which is rather friendly to experimental implementations.
We demonstrate the power of our protocol for the optimal verification of Bell states, arbitrary two-qubit
pure states, and stabilizer states.
Last but not least, we prove that the protocol can also be used to perform other tasks including
fidelity estimation and state preparation.

\textit{Nondemolition quantum verification.---}%
The QND measurements are often realized through the entanglement with an ancilla system
followed by a measurement on the ancilla.
Let us consider the joint system-ancilla state
$\ket{\psi}\otimes\ket{0}$, where the ancilla qubit is initially prepared in state $\ket{0}$.
Next, we entangle the system and the ancilla via
\begin{equation}\label{eq:U_define}
    U_i=\Omega_i\otimes\openone+(\openone-\Omega_i)\otimes X\,,
\end{equation}
with $X$ being the Pauli-$X$ operator.
Then we obtain
\begin{equation}
    \ket{\psi_i}:=\Omega_i\ket{\psi}\otimes\ket{0}+(\openone-\Omega_i)\ket{\psi}\otimes\ket{1}\,,
\end{equation}
where $\{\Omega_i,\openone-\Omega_i\}$ are the ``pass-or-fail" tests for verifying $\ket{\psi}$ in standard QSV.
Note that the unitarity of $U_i$ is ensured since $\Omega_i$ is a projector.
With this operation, performing a Pauli-$Z$ measurement on the ancilla qubit of the coupled state $\ket{\psi_i}$
is effectively equivalent to the realization of the two-outcome measurement
$\{\Omega_i,\openone-\Omega_i\}$ on the system.
This procedure can be concisely described by the operation
\begin{equation}\label{eq:Mi_define}
    \cM_i=\left(\openone\otimes\ket{0}\bra{0}\right)U_i
\end{equation}
on the joint system $\ket{\psi}\otimes\ket{0}$, which is a QND measurement on $\ket{\psi}$.
Note that $\cM_i$ corresponds to a positive operator-valued measure, and is usually not Hermitian.

It can be easily checked $\cM_i\bigl(\ket{\psi}\otimes\ket{0}\bigr)=\ket{\psi}\otimes\ket{0}$ for all $i$.
Therefore, verifying the target state $\ket{\psi}$ by $\Omega_i$ is exactly the same as
verifying $\ket{\psi}$ nondestructively using $\cM_i$.
In this way, we reformulate the procedure of QSV using QND measurements with the addition
of ancilla qubits, which we dub as \emph{nondemolition quantum verification} (NDQV).
Here we have two remarks. First, due to the dichotomic nature of the measurements
$\{\Omega_i,\openone-\Omega_i\}$ on the system, the coupled ancilla can always be chosen
as a two-dimensional qubit, no matter what the dimension of the target system is \cite{Guo.etal2001}.
Second, the entangling operation $U_i$ is of a similar structure to those used in many other applications
like quantum error correction \cite{Gottesman1997, Preskill1998, Chiaverini.etal2004,
Knill2005, QEC2013} and can be realized with standard quantum gates.

\textit{Sequential NDQV.---}%
The NDQV protocol can, of course, be implemented in a probabilistic manner as that of the standard QSV.
However, doing so would be a waste of resources as the key
advantage of the NDQV protocol lies in the fact that
the target state is not destroyed and remains undisturbed as long as the test passes.
Thus, the post-measurement state can be reliably used and measured again.
Forasmuch, we introduce the \emph{sequential} NDQV protocol.
\begin{theorem}\label{thm:SNDQV}
If a target state $\ket{\psi}$ can be verified by the protocol $\Omega=\sum_i\mu_i\Omega_i$,
where $\Omega_i$s are local projectors,
then it can be verified optimally by
\begin{equation}\label{eq:cM}
  \cM=\prod_{i}\cM_i
\end{equation}
with $\cM_i$ being defined by Eqs.~\eqref{eq:Mi_define} and \eqref{eq:U_define}.
The spectral gap of $\cM$ is given by
\begin{equation}
  \nu(\cM)=1\,,
\end{equation}
indicating that the verification efficiency of $\cM$ is the same as that of the optimal global strategy.
\end{theorem}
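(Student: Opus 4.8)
The plan is to strip the ancilla off $\cM$, reduce it to an effective operator on the system alone, locate its Perron eigenvector, and then show that reusability collapses the accepted subspace onto the single line $\mathbb{C}\ket{\psi}$, which is exactly the optimal global test. First I would evaluate each $\cM_i$ on the qubit ancilla. Using $X\ket{0}=\ket{1}$, one finds that $\cM_i$ sends the ancilla-$\ket{0}$ sector into itself, acting there as $\Omega_i$ on the system, while it maps the ancilla-$\ket{1}$ sector into the ancilla-$\ket{0}$ sector. Hence the ancilla-$\ket{0}$ subspace is $\cM$-invariant, and on inputs $\ket{\phi}\otimes\ket{0}$ the product acts as $\cM\bigl(\ket{\phi}\otimes\ket{0}\bigr)=\bigl(T\ket{\phi}\bigr)\otimes\ket{0}$ with $T:=\prod_i\Omega_i$. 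Since $\cM$ pushes the whole space into this invariant block, its spectrum is that of $T$ together with zeros, so $\nu(\cM)=1-|\lambda_2(T)|$ and everything reduces to locating the spectrum of the product of projectors $T$.

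Next I would pin down the top of the spectrum. Because $\Omega_i\ket{\psi}=\ket{\psi}$ for every $i$, both $T\ket{\psi}=\ket{\psi}$ and $T^\dagger\ket{\psi}=\ket{\psi}$, so $\ket{\psi}\otimes\ket{0}$ is a fixed point of $\cM$; moreover $\|T\|\le\prod_i\|\Omega_i\|=1$, so $1$ is the largest eigenvalue with eigenvector $\ket{\psi}$. The verification hypothesis that $\Omega=\sum_i\mu_i\Omega_i$ certifies $\ket{\psi}$ is exactly the statement $\lambda_2(\Omega)<1$, i.e.\ $\bigcap_i\mathrm{range}(\Omega_i)=\mathbb{C}\ket{\psi}$; this one-dimensionality of the common $+1$ eigenspace is the input I would feed into the gap estimate.

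The heart of the proof, and the step I expect to be the main obstacle, is upgrading ``the common $+1$ eigenspace is the single line $\mathbb{C}\ket{\psi}$'' to $|\lambda_2(T)|=0$. The natural device is the reusability of the QND measurement: since the state survives every passed test, iterating the sequence realizes the alternating-projection dynamics, and von Neumann's theorem gives $T^m\to\ket{\psi}\bra{\psi}$ because the intersection of the ranges is one-dimensional. Thus the accepted subspace is driven onto $\mathbb{C}\ket{\psi}$, the sub-dominant spectral radius is sent to $0$, and the effective accepted projector equals the optimal global test $\ket{\psi}\bra{\psi}$, whose gap is $1$; equivalently, the conserved amplitude $\braket{\psi}{\phi}$ is the only surviving component, so any infidelity-$\epsilon$ input is accepted with probability tending to $1-\epsilon$, matching the global bound. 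The delicate point is that $T$ is non-normal, so a single sweep need not annihilate the sub-dominant eigenvalues, and its eigenvalues, singular values, and the operationally relevant worst-case acceptance probability must be reconciled; the argument therefore rests on the iteration/reuse afforded by nondemolition rather than on one pass, and on the convergence of alternating projections to the one-dimensional intersection, whence $\lambda_2(\cM)=0$ and $\nu(\cM)=1$.
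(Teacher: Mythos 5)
Your reduction of $\cM$ to the system operator $T=\prod_i\Omega_i$ (block upper-triangular with respect to the all-$\ket{0}$ ancilla sector, so that the spectrum of $\cM$ is that of $T$ padded with zeros) is correct and matches what the paper does via its summation form; so is the identification of $1$ as the top eigenvalue with eigenvector $\ket{\psi}\otimes\ket{0}^{\otimes l}$. The genuine gap is at exactly the step you flagged as the heart of the proof, and your proposed resolution does not close it. Halperin's extension of von Neumann's alternating-projection theorem does give $T^m\to\ket{\psi}\bra{\psi}$ whenever $\bigcap_i\mathrm{range}(\Omega_i)=\mathbb{C}\ket{\psi}$, but convergence of powers only implies $|\lambda_2(T)|<1$, never $\lambda_2(T)=0$. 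Concretely, take $\Omega_1=\ket{\psi}\bra{\psi}+\ket{a}\bra{a}$ and $\Omega_2=\ket{\psi}\bra{\psi}+\ket{b}\bra{b}$ with $\ket{a},\ket{b}\perp\ket{\psi}$ and $0<|\braket{a}{b}|<1$: the intersection of the ranges is $\mathbb{C}\ket{\psi}$ and the powers of $T=\Omega_1\Omega_2=\ket{\psi}\bra{\psi}+\braket{a}{b}\,\ket{a}\bra{b}$ do converge to $\ket{\psi}\bra{\psi}$, yet $\lambda_2(T)=|\braket{a}{b}|^2\neq0$. Since your argument uses nothing beyond "projectors fixing $\ket{\psi}$" and "one-dimensional common range" (in particular it never invokes locality or any finer structure of the $\Omega_i$), it would apply verbatim to this example, where the conclusion is false; hence it can only ever establish $\nu(\cM)>0$, not $\nu(\cM)=1$.

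The appeal to iteration/reuse does not rescue the step: the theorem's $\cM$ is a single sweep through the settings, with gap $1-|\lambda_2(T)|$; repeating the sweep $m$ times analyzes a different protocol $\cM^m$, whose gap $1-|\lambda_2(T)|^m$ approaches but never equals $1$ for finite $m$ unless $\lambda_2(T)=0$ already. The paper closes the crux by an exact algebraic claim, not an asymptotic one: writing $\Omega_i=\ket{\psi}\bra{\psi}+P_i$ with $P_i$ a projector onto a subspace of $\ket{\psi}^\perp$ (so all cross terms vanish and $T=\ket{\psi}\bra{\psi}+\prod_iP_i$), it asserts that any basis vector accepted by one setting is annihilated outright by some other setting, which forces $\prod_iP_i=0$ and hence $T=\ket{\psi}\bra{\psi}$ exactly, i.e.\ $\lambda_2(T)=0$. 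That annihilation property is the ingredient your proof is missing; it holds in the paper's examples (Bell, two-qubit, stabilizer protocols, where the "extra" ranges are mutually orthogonal), but, as the counterexample above shows, it is strictly stronger than the bare verification hypothesis $\lambda_2(\Omega)<1$, so no argument relying on that hypothesis alone --- yours included --- can reach $\nu(\cM)=1$.
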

\begin{proof}
Here, we briefly sketch the proof and the full proof is presented in Appendix~A of the Supplemental Material \cite{supp}.
The protocol $\cM$ with $l$ sequential measurement settings can be written into a summation form, i.e.,
\begin{equation}\label{eq:M_define}
    \cM\!=\!
    \sum_{k_1k_2\cdots k_l}
  \!\Big\{\!\prod_i \!\Big[\frac{\openone}{2}+(-1)^{k_i}\frac{2\Omega_i-\openone}{2}\Big]\otimes \mathcal{Q}_{k_1k_2\cdots k_l}\!
  \Big\},
\end{equation}
where $\mathcal{Q}_{k_1k_2 \cdots k_l}:=\ket{00\cdots0}\bra{k_1 k_2 \cdots k_l}$ with $k_i\in\{0,1\}$
is an operator in the ancilla subspace.
With the ancilla qubits initially prepared in state $\ket{0}^{\otimes l}$, one gets
the spectral gap $\nu(\cM):=1-\lambda_2(\cM)$, where $\lambda_2(\cM)$ is the second largest eigenvalue of $\cM$.
The corresponding eigenvector of $\lambda_2(\cM)$ is given by $\ket{\psi^\perp}\otimes\ket{0}^{\otimes l}$ which satisfies $\langle\psi^\perp|\psi\rangle=0$.
Then, direct calculation can verify the relation $\nu(\cM)=\nu(\Omega_s)$, where
\begin{equation}\label{eq:Omega_s}
  \Omega_s:=\prod_i\Omega_i=\ket{\psi}\bra{\psi}\,,
\end{equation}
which is demanded in standard QSV.
As a result, $\nu(\cM)=1$.
\end{proof}

Several remarks are in order.
First, following Eq.~\eqref{eq:M_define}, we have
\begin{eqnarray}\label{eq:equiv}
  \cM\Bigl[\sigma\otimes\bigl(\ket{0}\bra{0}\bigr)^{\otimes l}\Bigr]=\Omega_s\sigma \otimes \bigl(\ket{0}\bra{0}\bigr)^{\otimes l}
\end{eqnarray}
for verifying an arbitrary state $\sigma$.
Thus, we can write the sequential NDQV measurement as
\begin{equation}\label{eq:global}
   \cM\widehat{=}\Omega_{s}\otimes\openone\,,
\end{equation}
where the symbol ``$\widehat{=}$" denotes the conditional equivalence when the ancilla register is initially prepared
in state $\ket{0}^{\otimes l}$, which is always the case for the sequential NDQV protocol.
In addition, the order of the measurements $\cM_i$s can be made arbitrary.
This property is a direct consequence of Eq.~\eqref{eq:Omega_s}, the form of which is independent of the order of $\Omega_i$s.
This makes the sequential NDQV protocol rather friendly to experimental implementations.

Second, the optimality of the sequential NDQV protocol directly leads to
the following corollary regarding efficiency.
\begin{corollary}\label{cor:NoMoreMeasure}
The verification efficiency of the sequential NDQV protocol will not be improved by adding
more measurement settings.
\end{corollary}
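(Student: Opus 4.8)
The plan is to derive the corollary directly from the optimality already established in Theorem~\ref{thm:SNDQV}, using the fact that the spectral gap of any legitimate verification protocol is bounded above by $1$. First I would recall from Eq.~\eqref{eq:QSVparameter} that the required number of copies $N$ is a strictly decreasing function of the spectral gap $\nu$; improving the verification efficiency is therefore synonymous with enlarging $\nu$. Thus it suffices to show that appending measurement settings cannot increase $\nu$ beyond the value already attained.

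Next I would establish the universal ceiling $\nu\le 1$. By the conditional equivalence of Eq.~\eqref{eq:global}, the action of any NDQV protocol on the system reduces to the effective operator $\Omega_s=\prod_i\Omega_i$. Since every $\Omega_i$ is a projector fixing $\ket{\psi}$, we have $\Omega_s\ket{\psi}=\ket{\psi}$ and $0\le\Omega_s\le\openone$, so the largest eigenvalue equals $1$ (attained by $\ket{\psi}$) and the second largest obeys $\lambda_2\ge 0$. Hence $\nu=1-\lambda_2\le 1$ for every such protocol, whether extended or not—this is precisely the efficiency of the ideal global strategy $\ket{\psi}\bra{\psi}$, which no protocol can surpass. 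Theorem~\ref{thm:SNDQV} gives $\nu(\cM)=1$, so the sequential NDQV protocol already saturates this bound. Appending $m$ further settings yields $\cM'=\prod_{i=1}^{l+m}\cM_i$, whose spectral gap necessarily satisfies $\nu(\cM')\le 1=\nu(\cM)$, so the efficiency cannot be improved.

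Finally I would note that equality in fact holds: since each additional projector fixes $\ket{\psi}$, acting with them on $\Omega_s=\ket{\psi}\bra{\psi}$ leaves it invariant, giving $\Omega_s'=\ket{\psi}\bra{\psi}=\Omega_s$ and $\nu(\cM')=\nu(\cM)=1$. The only step demanding care is the universal bound $\nu\le 1$: one must confirm that appending settings keeps the scheme a valid verification strategy, i.e.\ that each new $\Omega_i$ is still a projector with $\Omega_i\ket{\psi}=\ket{\psi}$. Granting this, there is no genuine obstacle—the substance of the corollary is simply the observation that $\nu=1$ is the information-theoretic ceiling, which Theorem~\ref{thm:SNDQV} has already reached.
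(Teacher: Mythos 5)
Your proposal is correct in substance, but it takes precisely the route the paper announces it will \emph{not} take: the main text says the corollary is proved ``rather than taking it as a straightforward consequence of the optimality property \dots by direct calculations,'' and Appendix~B does exactly that. There, the added setting is allowed to be a general POVM element $\Omega_j=\lambda_0\ket{\psi}\bra{\psi}+\sum_p\lambda_j^p\ket{\psi^{p}}\bra{\psi^{p}}$ with $0<\lambda_0\leq 1$, and the computation $\cM'=\cM\cM_j\widehat{=}\lambda_0\ket{\psi}\bra{\psi}\otimes\openone$ gives $\nu(\cM')=\lambda_0\le 1$. That buys two things your argument does not: it covers non-projective extra settings (for which the target state no longer passes with certainty), and it quantifies the effect exactly, showing that a setting with $\lambda_0<1$ actually \emph{degrades} the gap rather than merely failing to improve it. Your argument is the more conceptual one: identify the ceiling $\nu\le 1$, note that Theorem~1 saturates it, and check that appending projectors fixing $\ket{\psi}$ leaves $\Omega_s=\ket{\psi}\bra{\psi}$ invariant so that equality persists. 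The computational core of your equality step ($\ket{\psi}\bra{\psi}\Omega_j=\ket{\psi}\bra{\psi}$) is the same multiplication the paper performs with $\lambda_0=1$; what differs is the logical vehicle, namely the universal bound rather than the explicit evaluation of the extended protocol.

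One step needs tightening. You assert $0\le\Omega_s\le\openone$ to conclude $\lambda_2\ge 0$, but $\Omega_s=\prod_i\Omega_i$ is a product of generally non-commuting projectors, hence need not be Hermitian, and operator positivity is not even well defined for it (for three or more factors the eigenvalues can be non-real, e.g.\ the Bargmann-invariant eigenvalue of a product of three rank-one projectors). The conclusion survives, but you should justify it differently: either operationally (completeness forces any state $\sigma=(1-\epsilon)\ket{\psi}\bra{\psi}+\epsilon\rho^{\perp}$ to pass with probability at least $1-\epsilon$, so no complete strategy can have $\nu>1$), or simply by noting that your own invariance computation already yields $\Omega_s'=\ket{\psi}\bra{\psi}$, whence $\lambda_2(\cM')=0$ and $\nu(\cM')=1$ directly --- which makes the ceiling step dispensable for the projector case the corollary is about.
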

Rather than taking it as a straightforward consequence of the optimality property, we prove this corollary by
direct calculations in Appendix~B \cite{supp}.
This property of the sequential NDQV protocol is very different from that of the standard QSV strategies
where more measurement settings usually can improve the verification efficiency \cite{Pallister.etal2018}.
In other words, our protocol provides a universal upper bound for the minimal number of
measurement settings demanded for state verification.

Next, the sequential NDQV protocol offers two additional by-products, namely, fidelity estimation
and state preparation.
\begin{corollary}\label{cor:fidEst}
The average fidelity between the unknown state $\sigma$ and the target state $\ket{\psi}$,
i.e., $\cF=\langle F\rangle$, can be directly estimated by the sequential NDQV protocol,
\begin{equation}\label{eq:fidelity}
    \cM\Bigl[\sigma\otimes\bigl(\ket{0}\bra{0}\bigr)^{\otimes l}\Bigr]\cM^\dagger
    =F\ket{\psi}\bra{\psi}\otimes\bigl(\ket{0}\bra{0}\bigr)^{\otimes l}\,,
\end{equation}
where $F=\bra{\psi}\sigma\ket{\psi}$ is the fidelity between $\sigma$ and $\ket{\psi}$.
\end{corollary}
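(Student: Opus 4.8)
The plan is to evaluate the two-sided expression $\cM\bigl[\sigma\otimes\bigl(\ket{0}\bra{0}\bigr)^{\otimes l}\bigr]\cM^\dagger$ by reducing it to the sandwiched operator $\Omega_s\sigma\Omega_s$ and then exploiting the rank-one form $\Omega_s=\ket{\psi}\bra{\psi}$ from Eq.~\eqref{eq:Omega_s}. The whole argument is an algebraic consequence of the conditional equivalence in Eq.~\eqref{eq:equiv} together with its Hermitian conjugate, so no structural input beyond Theorem~\ref{thm:SNDQV} is needed.

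First I would dispose of the left multiplication by invoking Eq.~\eqref{eq:equiv} verbatim,
\begin{equation}
  \cM\bigl[\sigma\otimes\bigl(\ket{0}\bra{0}\bigr)^{\otimes l}\bigr]=\Omega_s\sigma\otimes\bigl(\ket{0}\bra{0}\bigr)^{\otimes l}\,.
\end{equation}
To treat the right multiplication by $\cM^\dagger$ without re-expanding the summation form of Eq.~\eqref{eq:M_define}, I would simply take the Hermitian conjugate of Eq.~\eqref{eq:equiv}. Since $\sigma$, $\bigl(\ket{0}\bra{0}\bigr)^{\otimes l}$, and $\Omega_s=\ket{\psi}\bra{\psi}$ are all Hermitian, conjugation yields
\begin{equation}
  \bigl[\sigma\otimes\bigl(\ket{0}\bra{0}\bigr)^{\otimes l}\bigr]\cM^\dagger=\sigma\Omega_s\otimes\bigl(\ket{0}\bra{0}\bigr)^{\otimes l}\,.
\end{equation}

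Next I would combine the two relations. Factoring $\Omega_s\otimes\openone$ out to the left of the first result and then inserting the conjugated identity for the remaining factor gives
\begin{equation}
  \cM\bigl[\sigma\otimes\bigl(\ket{0}\bra{0}\bigr)^{\otimes l}\bigr]\cM^\dagger=\Omega_s\sigma\Omega_s\otimes\bigl(\ket{0}\bra{0}\bigr)^{\otimes l}\,.
\end{equation}
Finally, using $\Omega_s=\ket{\psi}\bra{\psi}$, the sandwich collapses,
\begin{equation}
  \Omega_s\sigma\Omega_s=\ket{\psi}\bra{\psi}\sigma\ket{\psi}\bra{\psi}=\bra{\psi}\sigma\ket{\psi}\,\ket{\psi}\bra{\psi}=F\,\ket{\psi}\bra{\psi}\,,
\end{equation}
with $F=\bra{\psi}\sigma\ket{\psi}$, which is exactly Eq.~\eqref{eq:fidelity}. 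The fidelity is then recovered operationally as the surviving weight of the (unnormalized) output, $\tr\bigl\{\cM[\sigma\otimes(\ket{0}\bra{0})^{\otimes l}]\cM^\dagger\bigr\}=F$; averaging this weight over a run of inputs returns $\cF=\langle F\rangle$, which is what makes the protocol double as a fidelity estimator.

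I do not expect a genuine obstacle, since the claim is essentially a one-line corollary of Eq.~\eqref{eq:equiv}. The only point needing care is the bookkeeping for the right action of $\cM^\dagger$: one must confirm that the conditional equivalence, stated for $\cM$ acting from the left on an ancilla register prepared in $\ket{0}^{\otimes l}$, transfers correctly to the right factor. Obtaining that relation by conjugation rather than by direct expansion of Eq.~\eqref{eq:M_define} sidesteps this subtlety cleanly, and the Hermiticity of $\sigma$, $\bigl(\ket{0}\bra{0}\bigr)^{\otimes l}$, and $\Omega_s$ is precisely what licenses the conjugation step.
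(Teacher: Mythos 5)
Your proposal is correct and follows essentially the same route as the paper, which proves this corollary by direct calculation from Eqs.~\eqref{eq:Omega_s} and \eqref{eq:global}: apply the conditional equivalence $\cM\bigl[\sigma\otimes(\ket{0}\bra{0})^{\otimes l}\bigr]=\Omega_s\sigma\otimes(\ket{0}\bra{0})^{\otimes l}$ on the left, its Hermitian conjugate on the right, and collapse $\Omega_s\sigma\Omega_s=F\ket{\psi}\bra{\psi}$ using $\Omega_s=\ket{\psi}\bra{\psi}$. Your explicit handling of the $\cM^\dagger$ factor via conjugation of Eq.~\eqref{eq:equiv} is exactly the bookkeeping the paper leaves implicit, and it is sound.
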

This corollary can be proved by direct calculations using Eqs.~\eqref{eq:Omega_s} and \eqref{eq:global}.
One notices that the resulting output state after the measurement $\mathcal{M}$ has a successful rate $F=\bra{\psi}\sigma\ket{\psi}$, which is also the fidelity between the unknown state $\sigma$ and the target state $\ket{\psi}$.
This indicates that, as long as the measurement $\mathcal{M}$ is successful, the system must be in the target state $\ket{\psi}$. 
The probability of success is given by the fidelity $F$.
Thus, we can estimate $\cF\!=\!\langle F\rangle$ according to the statistical average of the successful rate of the verification.

In addition, the sequential NDQV protocol can be regarded as a state preparation process with successful rate $\cF\!=\!\langle F\rangle$; see Appendix~C \cite {supp} for more details.
In order to prepare one single target state with a possibly malicious provider, other approaches, including the adversarial scenario discussed in standard QSV \cite{Zhu.Hayashi2019c,Zhu.Hayashi2019d} and the related work within the realm of measurement-based quantum computing \cite{Hayashi.Morimae2015,Takeuchi.etal2019}, require additionally a polynomial number of copies of the states.
Our protocol, instead, is able to preserve the target state undisturbed with a higher efficiency without costing additional state preparations.
It enables the ``real-time verification'', in the sense that the state output from the source, being the target state or not, will be projected to the target state with a certain probability, which in turn can be immediately used for any subsequent applications without requiring any further operations or additional state preparations.
This potentially very useful feature is way beyond the reach of other methods.

Lastly, we emphasize that in sequential NDQV protocol, the repeated preparation of the unknown state
to be verified is replaced by the repeated preparation of an ancilla qubit in state $\ket{0}$.
This significantly simplifies the procedure in most experimental scenarios as the dimension of the ancilla qubit
is small and can be initialized much more efficiently.
In the case that repeated use of an ancilla is allowed, either by nondemolishing ancilla measurement
or fast ancilla repreparation, only one single ancilla qubit is physically needed to implement the sequential NDQV protocol.
Moreover, local measurements in different basis on the system, which can be difficult to implement,
are also replaced by the simple Pauli-$Z$ measurement on the ancilla.
In addition, the verified state $\ket{\psi}$ is preserved by the sequential NDQV protocol for future tasks.
All of these come with the cost of implementing the gate operation $U_i$s between the system and the ancilla.
With the booming effort that many research groups and industrial companies are putting into the development of
implementing high fidelity quantum gate operations in various physical systems, we believe that the implementation of the sequential
NDQV protocol will surely become more and more efficient.
For a more thorough discussion on the resource overhead, see Appendix C in the Supplemental Material \cite{supp}.

\textit{Bell state verification.---}%
Consider the case of verifying the Bell state $\ket{\Phi}=(\ket{00}+\ket{11})/\sqrt{2}$.
In the standard QSV protocol $\Omega_{\mathrm{Bell}}$, this state can be verified efficiently
using two measurement settings \cite{Pallister.etal2018}
\begin{equation}
\begin{aligned}
  \Omega_1&=P_{ZZ}^{+}=\ket{0}\bra{0}\otimes\ket{0}\bra{0}+\ket{1}\bra{1}\otimes\ket{1}\bra{1}\,,\\
  \Omega_2&=P_{XX}^{+}=\ket{+}\bra{+}\otimes\ket{+}\bra{+}+\ket{-}\bra{-}\otimes\ket{-}\bra{-}\,,
\end{aligned}
\end{equation}
where $\ket{\pm}=(\ket{0}\pm\ket{1})/\sqrt{2}$.
Taking $\Omega_{\mathrm{Bell}}=\frac1{2}(\Omega_1+\Omega_2)$,
the corresponding spectral gap is  $\nu(\Omega_{\mathrm{Bell}})=\frac{1}{2}$.

\begin{figure}[t]
    \includegraphics[width=1.0\columnwidth]{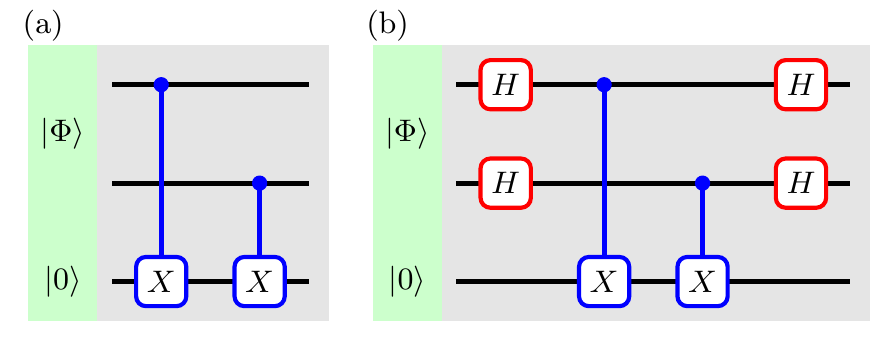}
    \caption{Circuits for the experimental realization of the coupling operations $U_1$ (a) and $U_2$ (b)
    as in Eq.~\eqref{eq:QSV_Bell}.
    $\ket{\Phi}$ is the target Bell state to be verified, and $\ket{0}$ represents the ancilla qubit.}
    \label{fig:circuits}
\end{figure}

Using Theorem~\ref{thm:SNDQV}, we construct the sequential NDQV protocol
for verifying $\ket{\Phi}$ as $\cM_{\mathrm{Bell}}=\cM_1\cM_2$ (or equivalently as
$\cM_{\mathrm{Bell}}=\cM_2\cM_1$).
The two QND measurement settings $\cM_{1(2)}$ are defined as in Eq.~\eqref{eq:Mi_define} with
\begin{equation}
\begin{aligned}
  U_1&=P_{ZZ}^{+}\otimes\openone+\left(\openone-P_{ZZ}^{+}\right)\otimes X\\
  &=\cC_{X1a}\cC_{X2a}\,,\\
  U_2&=P_{XX}^{+}\otimes\openone+\left(\openone-P_{XX}^{+}\right)\otimes X\\
  &=\bigl(H\otimes H\otimes \openone\bigr)\cC_{X1a}\cC_{X2a}\bigl(H\otimes H\otimes \openone\bigr),
  \label{eq:QSV_Bell}
\end{aligned}
\end{equation}
where $H$ is the Hadamard gate, and $\cC_{Xja}$ denotes the \textsc{cnot} gate that the ancilla qubit $a$
is controlled by the $j$th qubit.
The corresponding circuits for the experimental realization of the couplings $U_{1(2)}$
are illustrated in Fig.~\ref{fig:circuits}.

Specifically, the sequential NDQV protocol for verifying $\ket{\Phi}$ proceeds as follows.
Together with an ancilla qubit prepared in state $\ket{0}$, the actual state $\sigma$ is sent into the circuit
for the coupling operation $U_1$,
followed by a Pauli-$Z$ measurement on the ancilla.
If the measurement outcome is $\ket{0}$, together with another freshly prepared ancilla qubit
in state $\ket{0}$, the system state is passed on to the circuit for the coupling operation $U_2$
followed by a Pauli-$Z$ measurement on the ancilla.
If the outcome is still $\ket{0}$, we declare the test passes.
In any other cases, we say that the test fails.
In this way, the protocol $\cM_{\mathrm{Bell}}$ has a spectral gap $\nu(\cM_\mathrm{Bell})=1$,
which is equivalent to the optimal global strategy.

One final remark concerning the verification efficiency is that, in standard QSV,
the efficiency can be further improved to $\nu(\Omega_{\mathrm{Bell}}')=\frac{2}{3}$
by adding an additional measurement setting $P_{YY}^{-}$ \cite{Pallister.etal2018,Liu.etal2019b}.
However, by Corollary~\ref{cor:NoMoreMeasure}, more measurement settings will not help improve
the sequential NDQV protocol as it is already equivalent to the
optimal global strategy.

\textit{Verification of arbitrary two-qubit pure states.---}%
Without loss of generality, we write the two-qubit entangled pure state as
$\ket{\Psi}=\sin\theta\ket{00}+\cos\theta\ket{11}$ with $\theta\in(0,\pi/4)$.
In standard QSV, this state can be verified efficiently using three settings~\cite{Liu.etal2019b},
\begin{equation}
\begin{aligned}
  \Omega_1&=P_{ZZ}^{+}=\ket{0}\bra{0}\otimes\ket{0}\bra{0}+\ket{1}\bra{1}\otimes\ket{1}\bra{1}\,,\\
  \Omega_2&=\openone-\ket{+}\bra{+}\otimes\ket{\varphi_{+}}\bra{\varphi_{+}}\,,\\
  \Omega_3&=\openone-\ket{-}\bra{-}\otimes\ket{\varphi_{-}}\bra{\varphi_{-}}\,,
  \label{eq:QSV_2qb}
\end{aligned}
\end{equation}
where $\ket{\varphi_{\pm}}=\cos\theta\ket{0}\mp\sin\theta\ket{1}$.
The corresponding spectral gap is $\nu(\Omega_{\mathrm{2qb}})=\frac1{3}$ by taking
$\Omega_{\mathrm{2qb}}=\frac1{3}\sum_{i=1}^3\Omega_i$.
The efficiency can be improved by modified QSV protocols with different measurement settings.
More details can be found in Appendix~D \cite{supp},
where we also discuss how to perform verification using adaptive methods.
Then in Appendix~E \cite{supp}, we show how to realize the corresponding adaptive QND measurements.

By applying Eqs.~\eqref{eq:Mi_define} and \eqref{eq:U_define}, we find that the QND implementation
of Eq.~\eqref{eq:QSV_2qb} is given by
\begin{equation}
\begin{aligned}
 \cM_{1}
  &=\bigl(\openone\otimes\ket{0}\bra{0}\bigr)\cC_{X1a}\cC_{X2a},\\
  \cM_{i}^{t}
  &=\bigl(\openone\otimes\ket{0}\bra{0}\bigr)\bigl(R_i^{\dagger}\otimes\openone\bigr)\bigl(X\otimes X\otimes\openone\bigr)\\
  &\,\,\,\quad\cC^2_{X12a}\bigl(X\otimes X\otimes\openone\bigr)\bigl(R_i\otimes \openone\bigr)\,,
\end{aligned}
\end{equation}
for $i=\{2,3\}$, where the rotation matrix $R_{2(3)}$ turns the state $\ket{+}\otimes\ket{\varphi_{+}}$
($\ket{-}\otimes\ket{\varphi_{-}}$) into $\ket{00}$. Specifically,
\begin{equation}
  R_2=H\otimes \left[\begin{smallmatrix}\cos\theta&-\sin\theta\\\sin\theta&\cos\theta\end{smallmatrix}\right]\;
\mathrm{and}\;
  R_3=XH\otimes \left[\begin{smallmatrix}\cos\theta&\sin\theta\\-\sin\theta&\cos\theta\end{smallmatrix}\right].
\end{equation}
These QND measurements require a Toffoli (\textsc{ccnot}) gate $\cC^2_{X12a}$ which is a three-body coupling operation.
For systems where the Toffoli gate is not easily accessible, 
one can effectively replace it with two \textsc{cnot} gates using two ancilla qubits initially prepared in $\ket{00}$.
We denote this set of QND measurements by $\cM_{i}^{b}$ and, for $i=\{2,3\}$,
\begin{equation}\label{eq:NDQV_2qb_2}
\cM_{i}^{b}
 =\openone-\bigl(\openone\otimes\ket{00}\bra{00}\bigr)\bigl(R_i^{\dagger}\otimes\openone\bigr)\cC_{X1a}\cC_{X2a'}\bigl(R_i\otimes \openone\bigr)\,.
\end{equation}

The equivalence of the two sequential NDQV protocols $\cM^{t}=\cM_1\cM_2^t\cM_3^t$ and
$\cM^{b}=\cM_1\cM_2^b\cM_3^b$ (or arbitrary permutations of each of the three measurement settings)
can be obtained from $\cM_{i}^t\widehat{=}\cM_{i}^b$ by direct calculations, and both of them are
equivalent to the optimal global strategy.
The latter replaces the Toffoli gate by two \textsc{cnot} gates at the cost of one additional ancilla qubit.
In fact, this equivalence also holds for many-body coupling operations;
see below the proposition with the proof shown in Appendix~F in the Supplemental Material \cite{supp}.

\begin{proposition}\label{prop:2bodyCouple}
  For the specific setting of NDQV where the ancilla is always prepared in $\ket{0}$ and measured
  in the Pauli-$Z$ basis, a generalized $(n+1)$-body Toffoli gate can always be replaced by $n$ two-body
  \textsc{cnot} gates with ancilla qubits initially prepared in $\ket{0}^{\otimes n}$.
\end{proposition}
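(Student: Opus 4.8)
The plan is to show that the two QND measurements---the one built from a single generalized $(n+1)$-body Toffoli gate $T$ and the one built from $n$ \textsc{cnot}s---reduce, conditionally on the ancilla register starting in $\ket{0}^{\otimes n}$ (resp.\ $\ket0$), to one and the same operator on the system, and then to invoke transitivity of $\widehat{=}$. Writing $\vec{x}\in\{0,1\}^n$ for the computational basis of the $n$ control qubits and $\vec1=1\cdots1$, the Toffoli marks the rank-one subspace $\ket{\vec1}\bra{\vec1}$, so the associated ``pass'' projector is $\Omega=\openone-\ket{\vec1}\bra{\vec1}$. The goal is therefore $\cM^{t}\,\widehat{=}\,\Omega\otimes\openone\,\widehat{=}\,\cM^{b}$.

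For the Toffoli side I would write $T=\Omega\otimes\openone+\ket{\vec1}\bra{\vec1}\otimes X$ and $\cM^{t}=(\openone\otimes\ket0\bra0)\,T$. Acting on $\ket\psi\otimes\ket0$, the $\ket{\vec1}$ component flips the ancilla to $\ket1$ and is removed by the projector $\ket0\bra0$, while every orthogonal component passes untouched; hence $\cM^{t}(\ket\psi\otimes\ket0)=\Omega\ket\psi\otimes\ket0$, i.e.\ $\cM^{t}\,\widehat{=}\,\Omega\otimes\openone$ with the ancilla returned to $\ket0$.

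The crux is the \textsc{cnot} side, and the key observation is that the marked subspace is a \emph{single} basis vector, which can be rotated to the all-zeros string $\ket{\vec0}$---the unique ancilla input on which the copy-\textsc{cnot}s act as the identity. Concretely, set $G=(X^{\otimes n}\otimes\openone)\bigl(\prod_{j=1}^{n}\cC_{Xja_j}\bigr)(X^{\otimes n}\otimes\openone)$ and $\cM^{b}=\openone-(\openone\otimes\ket{\vec0}\bra{\vec0})\,G$ with the $n$ ancillas in $\ket{\vec0}$. A one-line calculation in the control basis gives $G(\ket{\vec x}\otimes\ket{\vec0})=\ket{\vec x}\otimes\ket{\vec x\oplus\vec1}$, so the ancilla register is returned to $\ket{\vec0}$ \emph{exactly} on the fail component $\vec x=\vec1$ and is orthogonal to $\ket{\vec0}$ on every pass component. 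Therefore $(\openone\otimes\ket{\vec0}\bra{\vec0})\,G\,(\ket\psi\otimes\ket{\vec0})=\bigl((\ket{\vec1}\bra{\vec1})\ket\psi\bigr)\otimes\ket{\vec0}$, and the $\openone-(\cdots)$ structure collapses this to $\cM^{b}(\ket\psi\otimes\ket{\vec0})=\Omega\ket\psi\otimes\ket{\vec0}$, with the ancillas disentangled.

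The hard part---and the reason the replacement is not a triviality---is ensuring that no residual system--ancilla entanglement survives: the $n$ \textsc{cnot}s copy all $n$ control bits and hence a priori carry far more information than the single AND bit produced by $T$. It is only because the marked subspace is the single string $\ket{\vec0}$ (after the local $X$s) that the ancillas reset there and decouple after the $\ket{\vec0}\bra{\vec0}$ projection; had I detected $\ket{\vec1}$ instead, the subtracted term would carry ancilla $\ket{\vec1}$ and the state would remain entangled. This is exactly the step to verify with care for general $n$, and it is what distinguishes the collective ``all-zero vs.\ not'' reading of the register (which $\cM^{b}$ encodes) from resolving each ancilla's $Z$-outcome separately. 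Since the copy-\textsc{cnot}s commute (distinct targets) and the rotations are local, the construction is order independent; combining the two sides gives $\cM^{t}\,\widehat{=}\,\cM^{b}$. Finally, for a general many-body coupling whose marked product state is $\ket\phi$ rather than $\ket{\vec1}$, one replaces $X^{\otimes n}$ by the local rotation $R$ with $R\ket\phi=\ket{\vec0}$ and the identical computation applies, recovering the $\cM_i^{b}$ of the two-qubit example.
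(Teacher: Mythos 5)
Your proof is correct and takes essentially the same route as the paper's Appendix F: both rest on the key identity that the sequence of $n$ \textsc{cnot}-based QND measurements is conditionally equivalent (under $\widehat{=}$) to a rank-one projector onto a single product string of the controls, use the complement structure $\openone-(\cdots)$ to recover the Toffoli's pass projector with the ancillas decoupled in $\ket{00\cdots0}$, and then generalize from $X^{\otimes n}$ to arbitrary local rotations $R$. The only cosmetic difference is that you conjugate the \textsc{cnot}s by $X^{\otimes n}$ so that the plain Toffoli marks $\ket{11\cdots1}$, whereas the paper conjugates the Toffoli so that the plain \textsc{cnot}s mark the all-zeros string; the two are related by the local unitary $X^{\otimes n}$ and the computations are identical.
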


\textit{Verification of stabilizer states.---}%
Stabilizer states, such as the GHZ states \cite{Gottesman1996,Gottesman1997}, are an important class
of multipartite states.
An $n$-qubit stabilizer state $\ket{\psi}$ can be determined by a stabilizer group $\mathcal{S}$,
where $\mathcal{S}$ is generated by a set of $n$ commuting Pauli operators $\{S_1,\dots,S_n\}$. 
With $S_i\ket{\psi}=\ket{\psi}$
for all $i$, the stabilizer group uniquely defines the state \ket{\psi}.

In standard QSV, an $n$-qubit stabilizer state can be verified with efficiency $\nu=1/n$
using minimally $n$ measurement settings constructed with the stabilizer generators \cite{Pallister.etal2018}.
This verification efficiency can be improved to $\nu=2^{n-1}/(2^n-1)$ if more settings
(like the $2^n-1$ linearly independent stabilizers) are used.
Since the stabilizer generators are Pauli operators, by Theorem~\ref{thm:SNDQV},
the sequential NDQV protocol can be realized using QND measurements with only
two-body couplings $\cC_X$ and suitable local operations.
These QND measurements can be implemented in the same way as the syndrome measurements for
stabilizer quantum error correction codes \cite{QEC2013}.
For a specific example, see Appendix~G \cite{supp}.

\textit{Discussion.---}%
As compared to standard QSV, the sequential NDQV protocol offers two major advantages,
namely, its optimal global efficiency and its robustness in the measurement sequence.
These advantages come with the unavoidable cost of adding additional ancilla qubits
as well as implementing coupling operations between the system and the ancilla.
In a way, we replace the resources from preparing the state to be verified over and over again
by the cheaper preparation of the ancilla state $\ket{0}$ for each QND measurement.
Although not implemented completely locally, the QND measurement is considered to be a standard technique, and has
been demonstrated in various platforms experimentally \cite{Grangier.etal1998, Nogues.etal1999,
Lupascu.etal2007, Geremia.etal2004, Neumann.etal2010, Robledo.etal2011, Nakajima.etal2019}.
Moreover, the requirement of only two-body couplings, as shown by Proposition~\ref{prop:2bodyCouple}, and the sequential nature of our protocol, can greatly simplify the experimental implementations.

One might find our sequential NDQV protocol similar to the adaptive QSV scheme which can also
be regarded as being realized sequentially.
However, they differ clearly in two major aspects. First, the system is directly measured in the standard
adaptive QSV without the implementation of QND measurements.
Second, in the adaptive approach, the choice of the latter measurement depends on the previous
measurement outcomes, whereas the order of measurement settings in the sequential NDQV
protocol can be arbitrary.
Nevertheless, the adaptive scheme can also be realized using QND measurements as demonstrated
in Appendix~E in the Supplemental Material \cite{supp}.

\textit{Conclusion.---}%
We have presented a universally optimal protocol for quantum state verification using QND measurements.
By virtue of the nondestructive feature of the QND measurements, we proposed the sequential NDQV protocol.
Under such a design, not only can we preserve the target states, but also make our protocol
equivalent to the optimal global strategy in terms of the verification efficiency.
Moreover, our protocol is robust in the sense that the order of the sequential measurements can be
arbitrarily constructed which is rather friendly to experimental implementations.
We demonstrated the power of our protocol through three concrete examples.
In addition, we proved that the protocol can also be used to perform tasks including fidelity estimation
and state preparation.
There are many other interesting aspects of the sequential NDQV protocol, which cannot all be covered in this Letter, to be investigated in the future.
In particular, the effect of noisy implementation, weak measurement on the ancilla and the use of entangled ancilla
when the state is shared among different sites are possible interesting directions.
By employing the state-process duality, our protocol can also be extended to verify quantum processes
including quantum gates and measurements.

\acknowledgments
We are grateful to Xiao-Dong Yu and Huangjun Zhu for helpful discussions.
This work was supported by the National Key R\&D Program of China
under Grant No.~2017YFA0303800 and the National Natural Science Foundation of
China through Grants No.~11574031, No.~61421001, and No.~11805010.  J.S. also
acknowledges support by the Beijing Institute of Technology Research Fund
Program for Young Scholars.
The Centre for Quantum Technologies is a Research Centre of Excellence funded by
the Ministry of Education and the National Research Foundation of Singapore.


%

%
\onecolumngrid
\appendix

\section{Appendix A: Proof of Theorem~1}
The sequential NDQV protocol $\cM$ operates on the joint space of the target state space $T$ and the ancilla space $A$.
Following Eqs.~\eqref{eq:Mi_define} and \eqref{eq:U_define}, $\cM$ with $l$ sequential measurement settings
can be rewritten into a summation form, such that
\begin{eqnarray}
  \cM&=&\prod_i \cM_i\nonumber\\
  &=& \prod_i \Big\{
  \big(\openone^{(T)} \otimes \ket{0}\bra{0}^{(i)} \otimes \openone^{(A\backslash i)}\big)
  \big(U_i^{(T \cup i)} \otimes \openone^{(A\backslash i)}\big)
  \Big\} \nonumber\\
  &=& \prod_i \Big\{
  \big(\openone^{(T)} \otimes \ket{0}\bra{0}^{(i)} \otimes \openone^{(A\backslash i)}\big)
  \big[\Omega_i^{(T)} \otimes \openone^{(i)} \otimes \openone^{(A\backslash i)} + (\openone-\Omega_i)^{(T)} \otimes X^{(i)} \otimes \openone^{(A\backslash i)}\big]
  \Big\} \nonumber\\
  &=& \prod_i \Big\{
  \big[\Omega_i^{(T)} \otimes \ket{0}\bra{0}^{(i)} \otimes \openone^{(A\backslash i)}\big]
  +\big[(\openone-\Omega_i)^{(T)} \otimes \ket{0}\bra{1}^{(i)} \otimes \openone^{(A\backslash i)}\big]
  \Big\} \nonumber\\
  &=& \sum_{k_1 k_2\cdots k_l} \bigg\{
  \Big[\prod_i \Big(\frac{\openone}{2}+(-1)^{k_i}\frac{2\Omega_i-\openone}{2}\Big)
  \Big]^{(T)}\otimes \mathcal{Q}_{k_1 k_2\cdots k_l}^{(A)}
  \bigg\}\,.
\end{eqnarray}
where $\mathcal{Q}_{k_1 k_2\cdots k_l}:=\ket{0}\bra{k_1}\otimes\ket{0}\bra{k_2}\otimes\cdots\otimes\ket{0}\bra{k_l}=\ket{00\cdots0}\bra{k_1 k_2 \cdots k_l}$ with  $k_i\in\{0,1\}$.
The subscripts label different measurement settings, whereas the superscripts represent the subspace where the corresponding operator lies on.
The relative complement notation $(A\backslash i)$ means that it operates on the ancilla subspace $A$ except qubit $i$.
Note that for simplicity and without causing any ambiguity, the superscripts are omitted in the main text.

The initialized ancilla register can be described as $\ket{0}^{\otimes l}$, meaning that all $l$ qubits are initialized in state $\ket{0}$.
Then, without loss of generality, we can write the actual output state as $\sigma=\ket{\psi_\epsilon}\bra{\psi_\epsilon}$ with $\ket{\psi_\epsilon}=\sqrt{1-\epsilon}\ket{\psi}+\sqrt{\epsilon}\ket{\psi^\perp}$, where $\ket{\psi^\perp}$
represents an arbitrary state orthogonal to $\ket{\psi}$ ($\langle\psi|\psi^\perp\rangle=0$) and $0<\epsilon<1$
is the infidelity.
Therefore, following the properties of each measurement setting that
\begin{eqnarray}
  \cM_i\bigl(\ket{\psi}\otimes\ket{0}\bigr)
  \ \ &=&\ket{\psi}\otimes\ket{0}\,,
\end{eqnarray}
for all $i$, we get
\begin{eqnarray}
  \cM\bigl(\ket{\psi}\otimes\ket{0}^{\otimes l}\bigr)\ \ &=&\ket{\psi}\otimes\ket{0}^{\otimes l}\,,
\end{eqnarray}
for the entire sequential measurement of $\cM$.
With this, the maximal probability for state $\sigma$, which satisfies $\langle\psi|\sigma|\psi\rangle\leq1-\epsilon$,
to pass the protocol $\cM$ is given by
\begin{eqnarray}
  &&\max_{\langle\psi|\sigma|\psi\rangle\leq1-\epsilon}
  \tr\Bigl[\cM\Bigl(\sigma\otimes\bigl(\ket{0}\bra{0}\bigr)^{\otimes l}\Bigr)\Bigr]\nonumber\\
  &=&\max_{|\psi^\perp\rangle}
  \tr\Big\{
  (1-\epsilon)\cM\Bigl(\ket{\psi}\bra{\psi}\otimes \bigl(\ket{0}\bra{0}\bigr)^{\otimes l}\Bigr)
  +\epsilon\cM\Bigl(\ket{\psi^\perp}\bra{\psi^\perp}\otimes\bigl(\ket{0}\bra{0}\bigr)^{\otimes l}\Bigr)\nonumber\\
  &&\qquad\quad+\sqrt{\epsilon}\sqrt{1-\epsilon}\cM\Bigl[\bigl(\ket{\psi^\perp}\bra{\psi}
  +\ket{\psi}\bra{\psi^\perp}\bigr)\otimes\bigl(\ket{0}\bra{0}\bigr)^{\otimes l}\Bigr]
  \Big\}\nonumber\\
  &=&1-[1-\lambda_2(\cM)]\epsilon\,,
\end{eqnarray}
where $\lambda_2(\cM)$ is the second largest eigenvalue of $\cM$ with the corresponding eigenstate
$\ket{\psi^\perp}\otimes\ket{0}^{\otimes l}$.
Hence, in order to verify the target state $\ket{\psi}$ within infidelity $\epsilon$ and confidence
level $1-\delta$, we need
\begin{equation}
  N\approx\frac{1}{\nu(\cM)}\epsilon^{-1}\ln\delta^{-1}
\end{equation}
copies of the states, where
\begin{equation}
  \nu(\cM):=1-\lambda_2(\cM)
\end{equation}
denotes the spectral gap between the largest and the second largest eigenvalues of $\cM$.

Similar to the standard QSV protocol \cite{Pallister.etal2018}, we also have
\begin{eqnarray}
  &&\cM\Bigl(\sigma\otimes\bigl(\ket{0}\bra{0}\bigr)^{\otimes l}\Bigr)\nonumber\\
  &=&\prod_i\Omega_i\sigma \otimes \mathcal{Q}_{00\cdots0} \bigl(\ket{0}\bra{0}\bigr)^{\otimes l}
  + \sum_{k_1 k_2 \cdots k_l\neq 00\cdots0} \bigg\{
  \Big[\prod_i \big(\frac{\openone}{2}+(-1)^{k_i}\frac{2\Omega_i-\openone}{2}\big)
  \Big] \sigma \otimes \mathcal{Q}_{k_1 k_2 \cdots k_l} \bigl(\ket{0}\bra{0}\bigr)^{\otimes l}
  \bigg\} \nonumber\\
  &=&\prod_i\Omega_i\sigma \otimes \bigl(\ket{0}\bra{0}\bigr)^{\otimes l}\,.
\end{eqnarray}
By defining $\Omega_s:=\prod_i\Omega_i$, we have
\begin{eqnarray}
  \max_{\langle\psi|\sigma|\psi\rangle\leq1-\epsilon}
  \tr\Bigl[\cM\Bigl(\sigma\otimes\bigl(\ket{0}\bra{0}\bigr)^{\otimes l}\Bigr)\Bigr]
  =\max_{\langle\psi|\sigma|\psi\rangle\leq1-\epsilon}
  \tr\big(\Omega_s\sigma\big)
  =1-\Big[1-\lambda_2(\Omega_s)\Big]\epsilon\,,
\end{eqnarray}
so that
\begin{equation}
    \nu(\cM)=\nu(\Omega_s)\,.
\end{equation}

As $\Omega_i$s are local projectors satisfying $\Omega_i\ket{0}=\ket{0}$, they should take on the general form
\begin{equation}
   \Omega_i=\ket{\psi}\bra{\psi}+\sum_{p}\lambda_i^p\ket{\psi_i^p}\bra{\psi_i^p}\,,
\end{equation}
where each set of the basis \{\ket{\psi_i^p}\} spans the subspace orthogonal to $\ket{\psi}$, and $\lambda_i^p=0\text{ or }1,~\forall(\Omega_i,p)$.
Then, one can get
\begin{equation}\label{eq:Omega_s}
  \Omega_s:=\prod_i\Omega_i=\ket{\psi}\bra{\psi}+\prod_i\Bigl(\sum_p \lambda_i^p\ket{\psi_i^p}\bra{\psi_i^p} \Bigr).
\end{equation}
Because no other states except the target one can pass all the measurement settings, that is, if $\lambda_j^q=1$, there exists at least one measurement setting with $\Omega_i\ket{\psi_j^q}=0$, leading to
\begin{eqnarray}
  \sum_p \lambda_i^p\ket{\psi_i^p}\bra{\psi_i^p}\psi_j^q\rangle=\bra{\psi}\psi_j^q\rangle\ket{\psi}=0\,.
\end{eqnarray}
Then we have
\begin{eqnarray}
  \Omega_s=\ket{\psi}\bra{\psi}\,,
\end{eqnarray}
which is independent of the order of the measurement settings $\Omega_i$.
Finally, we obtain
\begin{equation}
    \nu(\cM)=\nu(\Omega_s)=1\,.
\end{equation}

\section{Appendix B: Proof of Corollary~\ref{cor:NoMoreMeasure}}\label{app:NoMoreMeasure}
\begin{proof}
Consider a new measurement setting with the general form $\Omega_j=\lambda_0\ket{\psi}\bra{\psi}+\sum_p\lambda_j^p\ket{\psi^{p}}\bra{\psi^{p}}$, where $0<\lambda_0\leq 1$
as we allow $\Omega_j$ to be a general positive operator-valued measure.
Using Theorem~\ref{thm:SNDQV}, we construct the new sequential protocol as
\begin{equation}
   \cM'=\cM\cM_j\widehat{=}\lambda_0\ket{\psi}\bra{\psi}\otimes\openone\,.
\end{equation}
Hence, the new spectral gap is given by $\nu(\cM')=\lambda_0 \le 1$, meaning that the verification efficiency
is not improved.
\end{proof}

\section{Appendix C: Resource overhead of the sequential NDQV protocol}
Compared to the traditional method of quantum state tomography, QSV is able to improve the scaling of
the characterization efficiency from exponential to polynomial.
On top of this, the sequential NDQV protocol further pushes the efficiency to be equivalent to the optimal global strategy.

Generally speaking, there are two main resource cost in the sequential NDQV protocol.
One is that a large number of ancilla qubits are needed because each measurement in the sequence     requires an ancilla qubit.
However, since our protocol is constructed sequentially, in the case that repeated use of an ancilla qubit is allowed, either by nondemolishing ancilla measurement or fast ancilla reinitialization,
only one ancilla qubit is sufficient.
The other main resource cost is the implementation of the coupling operation in each measurement setting.
As demonstrated in Appendix~F below, all the couplings can be implemented by at most $O(n)$ local operations and $O(n)$ {\sc cnot} gates.
In addition, making use of the available multi-body couplings in experiments can often reduce these costs.

It is also worth noting that, in some scenario, the system state to be verified is distributed among different physically located verifiers via a quantum channel. 
For such a case, one single ancilla qubit might not be sufficient. 
According to Proposition 2, each verifier then needs to prepare an ancilla qubit locally, entangle it to its own system and carry out the local projective measurement on the ancilla. 
All these operations can be done easily by the verifiers and the measurement results can be sent via classical communication channels. 
In this case, the number of ancilla qubits required scales linearly as the number of verifiers.
Moreover, the local preparation of ancilla qubits and the use of classic communication is much more efficient and experimentally friendly than preparing and distributing the high dimensional system state over and over again.
In some special cases, using entangled ancilla qubits and distribute them to the verifiers using quantum channels might enhance the performance. However, this is beyond the scope of our current work and will be a future direction of investigation.

The successful implementation of the QSV protocols, including our sequential NDQV protocol, requires a continuous outputs of the ``pass" outcome.
In real experiment, however, this might not be satisfied due to various problems.
An alternative approach is to modify the data processing method by recording the frequency $f$ of the ``pass'' instances \cite{Yu.etal2019}.
If $f>1-\epsilon\nu$, the confidence level $1-\delta$ can be derived from the Chernoff bound
\begin{eqnarray}
  \delta\leq e^{-D[f||(1-\epsilon\nu)]N}\,,
\end{eqnarray}
where $D(x||y)=x\log(\frac{x}{y})+(1-x)\log(\frac{1-x}{1-y})$ is the Kullback-Leibler divergence.
Several recent QSV experiments \cite{Zhang.etal2020a,Jiang.etal2020,Zhang.etal2020b} employed exactly this approach to process the data,
and the advantages of QSV are demonstrated clearly.
However, how to account for the influence of various noise in experiments is still an open problem.

One may also regard the sequential NDQV protocol as a state preparation process as shown by Corollary~2 in the main text.
However, compared with the methods that generate arbitrary quantum states which usually requires complex unitary operations or even complex networks constructed by a universal set of gates, our protocol has a low complexity and resource consumption since only two-body {\sc cnot} gates, local rotations and local measurements are needed.
This is why state preparation of our protocol is only $\mathcal{F}$-efficient as shown by Corollary~2, so probably purification rather than direct preparation is more useful in practice.

\section{Appendix D: Standard QSV protocols for verifying arbitrary two-qubit pure states}\label{app:QSV_2qubit}
\subsection{1. The non-adaptive approach}
To verify an arbitrary two-qubit pure state $\ket{\Psi}=\sin\theta\ket{00}+\cos\theta\ket{11}$
with $\theta\in(0,\pi/4)$, the standard QSV protocol with the optimal verification efficiency
using only local and non-adaptive measurements contains four measurement settings \cite{Pallister.etal2018},
i.e.,
\begin{eqnarray}
  \Omega^{(4)}=\alpha(\theta)P_{ZZ}^{+}+\frac{1-\alpha(\theta)}{3}\sum_{k=1}^{3}\bigl[\openone-\ket{\phi_k}\bra{\phi_k}\bigr]\,,\quad\quad \mbox{for}~\alpha(\theta)=\frac{2-\sin(2\theta)}{4+\sin(2\theta)}\,,
\end{eqnarray}
with the efficiency giving by $\nu(\Omega^{(4)})=1/(2+\sin\theta\cos\theta)$.
The first setting $P_{ZZ}^{+}=\ket{00}\bra{00}+\ket{11}\bra{11}$ is the projector onto the positive eigenspace
of the Pauli measurement $ZZ$, and the rest three $\openone-\ket{\phi_k}\bra{\phi_k}$
are the measurements that reject the state \ket{\phi_k} where
\begin{eqnarray}
  \ket{\phi_1}&=&\left(\frac{1}{\sqrt{1+\tan{\theta}}}\ket{0}+\frac{e^{\frac{2\pi i}{3}}}{\sqrt{1+\cot{\theta}}}\ket{1}\right)\otimes\left(\frac{1}{\sqrt{1+\tan{\theta}}}\ket{0}+\frac{e^{\frac{\pi i}{3}}}{\sqrt{1+\cot{\theta}}}\ket{1}\right),\\
  \ket{\phi_2}&=&\left(\frac{1}{\sqrt{1+\tan{\theta}}}\ket{0}+\frac{e^{\frac{4\pi i}{3}}}{\sqrt{1+\cot{\theta}}}\ket{1}\right)\otimes\left(\frac{1}{\sqrt{1+\tan{\theta}}}\ket{0}+\frac{e^{\frac{5\pi i}{3}}}{\sqrt{1+\cot{\theta}}}\ket{1}\right),\\
  \ket{\phi_3}&=&\left(\frac{1}{\sqrt{1+\tan{\theta}}}\ket{0}+\frac{1}{\sqrt{1+\cot{\theta}}}\ket{1}\right)\otimes\left(\frac{1}{\sqrt{1+\tan{\theta}}}\ket{0}-\frac{1}{\sqrt{1+\cot{\theta}}}\ket{1}\right).
\end{eqnarray}

However, as we show in Eq.~\eqref{eq:QSV_2qb} of the main text, to verify \ket{\Psi} using only local
and non-adaptive measurements, a minimal three measurement settings are enough, which are
\begin{eqnarray}
  \Omega_1&=&P_{ZZ}^{+}=\ket{0}\bra{0}\otimes\ket{0}\bra{0}+\ket{1}\bra{1}\otimes\ket{1}\bra{1}\,,\nonumber\\
  \Omega_2&=&\openone-\ket{+}\bra{+}\otimes\ket{\varphi_{+}}\bra{\varphi_{+}}\,,\\
  \Omega_3&=&\openone-\ket{-}\bra{-}\otimes\ket{\varphi_{-}}\bra{\varphi_{-}}\,,\nonumber
\end{eqnarray}
where $\ket{\pm}=(\ket{0}\pm\ket{1})/\sqrt{2}$ and $\ket{\varphi_{\pm}}=\cos\theta\ket{0}\mp\sin\theta\ket{1}$.
The latter two $\Omega_{2(3)}$ are the measurements that reject the input states if getting $\ket{+}$ ($\ket{-}$) on the first subsystem and $\ket{\varphi_{+}}$($\ket{\varphi_{-}}$) on the second one simultaneously.
Thus, such a protocol can be constructed as
\begin{eqnarray}
  \Omega^{(3)}=\Omega_{\text{2qb}}=\frac{1}{3}(\Omega_1+\Omega_2+\Omega_3)\,,
\end{eqnarray}
with the efficiency $\nu(\Omega^{(3)})=1/3$ which is independent of the parameter $\theta$
and only a little worse than that of $\Omega^{(4)}$.

\subsection{2. The adaptive approach}
Furthermore, using the general transformation between the adaptive and non-adaptive schemes as presented
in Ref.~\cite{Liu.etal2019b}, the least number of measurement settings for verifying \ket{\Psi} can be reduced
to only two by replacing $\Omega_2$ and $\Omega_3$ in $\Omega^{(3)}$ with a single adaptive measurement
\begin{eqnarray}
  X_{\Psi}=\ket{+}\bra{+}\otimes\ket{\varphi_{+}^{\perp}}\bra{\varphi_{+}^{\perp}}+\ket{-}\bra{-}\otimes\ket{\varphi_{-}^{\perp}}\bra{\varphi_{-}^{\perp}}\,,\label{eq:QSV_adaptive}
\end{eqnarray}
where $\ket{\varphi_{\pm}^{\perp}}=\sin\theta\ket{0}\pm\cos\theta\ket{1}$.
Then we have the protocol
\begin{eqnarray}
  \Omega_\text{adp}^{(2)}=\frac{1}{2}P_{ZZ}^{+}+\frac{1}{2}X_{\Psi}\,,
\end{eqnarray}
which improves the verification efficiency to $\nu(\Omega_\text{adp}^{(2)})=1/2$.

Last but not least, we note that using adaptive measurements, an optimal efficiency can be
achieved by considering the symmetry between the measurement settings.
Such a protocol with three measurement settings has been proposed and proven in Ref.~\cite{Yu.etal2019},
\begin{equation}
  \Omega_\text{adp}^{(3)}=\frac{\cos^{2}\theta}{1+\cos^{2}\theta}P_{ZZ}^{+}
  +\frac1{2(1+\cos^{2}\theta)}X_{\Psi}
  +\frac1{2(1+\cos^{2}\theta)}Y_{\Psi}\,,
\end{equation}
where
\begin{equation}
\begin{aligned}
  P_{ZZ}^{+}&=\ket{0}\bra{0}\otimes\ket{0}\bra{0}+\ket{1}\bra{1}\otimes\ket{1}\bra{1}\,,\\
  X_{\Psi}&=\ket{\varphi_{0}}\bra{\varphi_{0}}+\ket{\varphi_{2}}\bra{\varphi_{2}}\,,\\
  Y_{\Psi}&=\ket{\varphi_{1}}\bra{\varphi_{1}}+\ket{\varphi_{3}}\bra{\varphi_{3}}\,,
\end{aligned}
\end{equation}
with $\ket{\varphi_{0}}=\frac1{\sqrt{2}}(\ket{0}+\ket{1})\otimes(\sin\theta\ket{0}+\cos\theta\ket{1})$
and $\ket{\varphi_{k}}=g^k\ket{\varphi_{0}}$.
The unitary operator $g$ is defined as $g=\Upsilon\otimes\Upsilon^\dagger$, where $\Upsilon$ is the phase gate,
i.e., $\Upsilon\ket{0}=\ket{0}$ and $\Upsilon\ket{1}=\mathrm{i}\ket{1}$.
Note that $X_{\Psi}$ is the same as that in Eq.~\eqref{eq:QSV_adaptive}.
Then the optimal efficiency with adaptive measurements is given by $\nu(\Omega_\text{adp}^{(3)})=1/(1+\cos^{2}\theta)$.

\section{Appendix E: Adaptive QND measurements}\label{app:adaptive}
We use the measurement setting
\begin{equation}
  X_{\Psi}=\ket{+}\bra{+}\otimes\ket{\varphi_{+}^{\perp}}\bra{\varphi_{+}^{\perp}}+\ket{-}\bra{-}\otimes\ket{\varphi_{-}^{\perp}}\bra{\varphi_{-}^{\perp}}
\end{equation}
in Eq.~\eqref{eq:QSV_adaptive} as an example to demonstrate
how to realize adaptive measurements using the nondemolition approach.
Since the measurements in standard QSV protocols are expected to be local, one can first rotate them
to the measurement basis $\{\ket{0},\ket{1}\}$, which can then be realized by the QND measurements
$\bigl\{\bigl(\openone\otimes\ket{0}\bra{0}\bigr)\cC_{X},\bigl(\openone\otimes\ket{1}\bra{1}\bigr)\cC_{X}\bigr\}$
straightforwardly.
Such rotations for the two adaptive measurements in $X_{\Psi}$ are given by
\begin{equation}
\begin{aligned}
  \bigl(H\otimes R_{+}\bigr) \bigl(\ket{+}\bra{+}\otimes\ket{\varphi_{+}^{\perp}}\bra{\varphi_{+}^{\perp}}\bigr) \bigl(H\otimes R_{+}\bigr)^\dagger&=\ket{0}\bra{0}\otimes\ket{0}\bra{0}\,,\\
  \bigl(H\otimes R_{-}\bigr) \bigl(\ket{-}\bra{-}\otimes\ket{\varphi_{-}^{\perp}}\bra{\varphi_{-}^{\perp}}\bigr) \bigl(H\otimes R_{-}\bigr)^\dagger&=\ket{1}\bra{1}\otimes\ket{0}\bra{0}\,,
\end{aligned}
\end{equation}
where $H$ is the Hadamard gate and $R_{\pm}$ are rotations that turn the state
$\ket{\varphi_{\pm}^{\perp}}$ into \ket{0}.
Specifically, we have
\begin{equation}
  R_{+}=\left[\begin{matrix}\sin\theta&\cos\theta\\-\cos\theta&\sin\theta\end{matrix}\right],
  ~R_{-}=\left[\begin{matrix}\sin\theta&-\cos\theta\\\cos\theta&\sin\theta\end{matrix}\right].
\end{equation}
Then, the adaptive QND measurements for $X_{\Psi}$ can be constructed as
\begin{eqnarray}\label{eq:adap_detail}
  \cM_{X_{\Psi}}
  &=&\Bigl[H^{(1)}\otimes R_{+}^{\dagger(2)}\otimes\openone^{(a,a')}\Bigr]
  \Bigl[\openone^{(1,2)}\otimes\openone^{(a)}\otimes\bigl(\ket{0}\bra{0}\bigr)^{(a')}\Bigr]
  \cC_{X2a'}
  \Bigl[\openone^{(1)}\otimes R_{+}^{(2)}\otimes\openone^{(a,a')}\Bigr]
  \nonumber\\
  &&
  \Bigl[\openone^{(1,2)}\otimes\bigr(\ket{0}\bra{0}\bigl)^{(a)}\otimes\openone^{(a')}\Bigr]
  \cC_{X1a}
  \Bigl[H^{(1)}\otimes\openone^{(2)}\otimes\openone^{(a,a')}\Bigr]\nonumber\\
  &+&
  \Bigl[H^{(1)}\otimes R_{-}^{\dagger(2)}\otimes\openone^{(a,a')}\Bigr]
  \Bigl[\openone^{(1,2)}\otimes\openone^{(a)}\otimes\bigl(\ket{0}\bra{0}\bigr)^{(a')}\Bigr]
  \cC_{X2a'}
  \Bigl[\openone^{(1)}\otimes R_{-}^{(2)}\otimes\openone^{(a,a')}\Bigr]
  \nonumber\\
  &&
  \Bigl[\openone^{(1,2)}\otimes\bigr(\ket{1}\bra{1}\bigl)^{(a)}\otimes\openone^{(a')}\Bigr]
  \cC_{X1a}
  \Bigl[H^{(1)}\otimes\openone^{(2)}\otimes\openone^{(a,a')}\Bigr]
  \,,
\end{eqnarray}
where the superscripts $(1,2)$ and $(a,a')$ denote the two system qubits and two ancilla qubits respectively.
Note that $\cM_{X_{\Psi}}$ contains two terms, because $X_{\Psi}$ is an adaptive measurement
with two branches \cite{Liu.etal2019b}.

Specifically, the whole measurement process consists of two steps.
The first step is to realize the QND version of the two-outcome projective measurement
$\{\ket{+}\bra{+},\ket{-}\bra{-}\}$ on the first particle.
The second is to realize the QND version of the adaptive projective measurement
$\ket{\varphi_{+}^{\perp}}\bra{\varphi_{+}^{\perp}}$ or $\ket{\varphi_{-}^{\perp}}\bra{\varphi_{-}^{\perp}}$
on the second particle according to the outcome of the first step.
To be specific, we rotate the first particle with a Hadamard gate $H$ and send it to a {\sc cnot} gate
$\cC_{X}$ together with an ancilla qubit $a$ in state \ket{0}.
Then we measure the ancilla qubit using a Pauli-$Z$ measurement.
If the outcome is \ket{0} (or \ket{1}), we rotate the second particle by $R_{+}$ (or $R_{-}$)
and send it to another {\sc cnot} gate together with a new ancilla qubit $a'$ in state \ket{0}.
After that, the ancilla is measured by a Pauli-$Z$ measurement, and we declare the test passes
if the outcome \ket{0} is obtained.
Finally, the state is rotated back to its original form.
Note that all the rotations and measurements after the coupling operations are commutative,
so that we can conveniently
choose to do the measurement in the very end.
This results in the alternative writing of Eq.~\eqref{eq:adap_detail}, i.e.,
\begin{eqnarray}
  \cM_{X_{\Psi}}
  &=&
  (\openone\otimes\openone\otimes\ket{00}\bra{00})
  (H\otimes R_{+}^{\dagger}\otimes\openone\otimes\openone)
  \cC_{X2a'}
  (\openone\otimes R_{+}\otimes\openone\otimes\openone)
  \cC_{X1a}
  (H\otimes\openone\otimes\openone\otimes\openone)\nonumber\\
  &+&
  (\openone\otimes\openone\otimes\ket{10}\bra{10})
  (H\otimes R_{-}^{\dagger}\otimes\openone\otimes\openone)
  \cC_{X2a'}
  (\openone\otimes R_{-}\otimes\openone\otimes\openone)\cC_{X1a}(H\otimes\openone\otimes\openone\otimes\openone)\,,
\end{eqnarray}
in which we omit the superscripts by showing the explicit operations performed on all parties.

\section{Appendix F: Proof of Proposition~\ref{prop:2bodyCouple}}\label{app:2bodyCouple}
\begin{proof}
  First, we consider the generalized Toffoli gate $\cC_{X}^{n}$ such that
  the ancilla qubit is controlled by $n$ system qubits, i.e.,
  \begin{eqnarray}
    \cC_{X}^{n}&=&\bigl(\openone^{\otimes n}-\ket{11\cdots1}\bra{11\cdots1}\bigr)\otimes\openone+\ket{11\cdots1}\bra{11\cdots1}\otimes X\,.
  \end{eqnarray}
  Note that the {\sc cnot} gate $\cC_{X}$ represents a special case of $\cC_{X}^{n}$ when $n=1$, namely
  \begin{eqnarray}
    \cC_{X}&=&\ket{0}\bra{0}\otimes\openone+\ket{1}\bra{1}\otimes X\,.
  \end{eqnarray}
  Following Eq.~\eqref{eq:equiv}, for an arbitrary input state $\sigma$ together with an ancilla qubit $\ket{0}$,
  the QND measurement with {\sc cnot} gates has the relation
  \begin{equation}
    \bigl[\bigl(\openone\otimes\ket{0}\bra{0}\bigr)\cC_{X}\bigr]\bigl(\sigma\otimes\ket{0}\bra{0}\bigr)
    =\bigl(\ket{0}\bra{0}\otimes\ket{0}\bra{0}+\ket{1}\bra{1}\otimes\ket{0}\bra{1}\bigr)\bigl(\sigma\otimes\ket{0}\bra{0}\bigr)
    =\bigl(\ket{0}\bra{0}\otimes\openone\bigr)\bigl(\sigma\otimes\ket{0}\bra{0}\bigr)\,,
  \end{equation}
  which tells us that $\bigl(\openone\otimes\ket{0}\bra{0}\bigr)\cC_{X} \widehat{=} \ket{0}\bra{0}\otimes\openone$.
  Then, the sequential measurement constructed using $n$ {\sc cnot} gates gives
  \begin{equation}\label{eq:app_cx}
    \bigl[\bigl(\openone\otimes\ket{0}\bra{0}\bigr)\cC_{X1a_1}\bigr]
    \bigl[\bigl(\openone\otimes\ket{0}\bra{0}\bigr)\cC_{X2a_2}\bigr]
    \cdots
    \bigl[\bigl(\openone\otimes\ket{0}\bra{0}\bigr)\cC_{Xna_n}\bigr]
    \widehat{=}
    \ket{00\cdots0}\bra{00\cdots0}\otimes\openone^{\otimes n}\,,
  \end{equation}
  where we adopt the convention that the first part $\ket{00\cdots0}\bra{00\cdots0}$ operates on the $n$ system qubits
  and the second part $\openone^{\otimes n}$ operates on the $n$ ancilla.
  With the rotation of a Pauli-$X$ measurement, the generalized Toffoli gate becomes
  \begin{equation}
    \bigl(X^{\otimes n} \otimes \openone\bigr)\cC_{X}^{n}\bigl(X^{\otimes n} \otimes \openone\bigr)= \bigl(\openone^{\otimes n}-\ket{00\cdots0}\bra{00\cdots0}\bigr)\otimes\openone+\ket{00\cdots0}\bra{00\cdots0}\otimes X\,,
  \end{equation}
  so that the QND measurement with rotated $\cC_{X}^{n}$ has the equivalence
  \begin{equation}\label{eq:app_cnx}
    \bigl(\openone^{\otimes n}\otimes\ket{0}\bra{0}\bigr)\bigl(X^{\otimes n} \otimes \openone\bigr)\cC_{X}^{n}\bigl(X^{\otimes n} \otimes \openone\bigr)
    \widehat{=}
    \bigl(\openone^{\otimes n}-\ket{00\cdots0}\bra{00\cdots0}\bigr)\otimes\openone\,.
  \end{equation}
  Following Eqs.~\eqref{eq:app_cx} and \eqref{eq:app_cnx}, one can quickly find
  \begin{equation}\label{eq:Toffoli}
  \begin{aligned}
    &\bigl(\openone^{\otimes n}\otimes\ket{0}\bra{0}\bigr)\bigl(X^{\otimes n} \otimes \openone\bigr)\cC_{X}^{n}\bigl(X^{\otimes n} \otimes \openone\bigr)\otimes\openone^{\otimes(n-1)}\\
    &\widehat{=}
    \openone^{\otimes 2n}-
    \bigl[\bigl(\openone\otimes\ket{0}\bra{0}\bigr)\cC_{X1a_1}\bigr]
    \bigl[\bigl(\openone\otimes\ket{0}\bra{0}\bigr)\cC_{X2a_2}\bigr]
    \cdots
    \bigl[\bigl(\openone\otimes\ket{0}\bra{0}\bigr)\cC_{Xna_n}\bigr]\,,
  \end{aligned}
  \end{equation}
  where $\openone^{\otimes(n-1)}$ indicates the addition of $(n-1)$ ancilla qubits needed to implement the gate $\cC^n_X$ with $n$ $\cC_X$ gates.
  Eq.~\eqref{eq:Toffoli} means that $(n+1)$-body Toffoli gate can always be replaced by $n$ two-body {\sc cnot} gates.

  Now we can generalize the proof by considering the $(n+1)$-body coupling in the form of
  \begin{equation}
    R\cC_{X}^{n}:=
    \bigl(R^{(1)\dag} \otimes R^{(2)\dag} \otimes \cdots \otimes R^{(n)\dag} \otimes \openone\bigr)\cC_{X}^{n}\bigl(R^{(1)} \otimes R^{(2)} \otimes \cdots \otimes R^{(n)} \otimes \openone\bigr)
    =\bigl(nR^\dag \otimes \openone\bigr)\cC_{X}^{n}\bigl(nR \otimes \openone\bigr)\,,
  \end{equation}
  where $R^{(i)}$s are arbitrary local unitary operations.
  One notes that as the initial state of the target qubit, which is the ancilla, in the QND measurements is fixed to be \ket{0}, the local operator $R^{(a)}=\openone$.
  Similarly, for a two-body coupling controlled by system qubit $i$ and targeted on ancilla qubit $a$, we have
  \begin{eqnarray}
    R\cC_{Xia}:=\bigl(R^{(i)\dagger} \otimes \openone\bigr)\cC_{Xia}\bigl(R^{(i)} \otimes \openone\bigr)&\widehat{=}&\ket{\phi_i}\bra{\phi_i}\otimes\openone\,,
  \end{eqnarray}
  where $\ket{\phi_i}=R^{(i)\dagger}\ket{0}$.
  So we have the QND measurements
  \begin{eqnarray}
    nR\cM_X&:=&
    \bigl[\bigl(\openone\otimes\ket{0}\bra{0}\bigr)R\cC_{X1a_1}\bigr]
    \bigl[\bigl(\openone\otimes\ket{0}\bra{0}\bigr)R\cC_{X2a_2}\bigr]
    \cdots
    \bigl[\bigl(\openone\otimes\ket{0}\bra{0}\bigr)R\cC_{Xna_n}\bigr]
    \widehat{=}\ket{\phi}\bra{\phi}\otimes\openone^{\otimes n}\,,
  \end{eqnarray}
  with $\ket{\phi}=\bigotimes_i\ket{\phi_i}$.
  Also, we can define the QND measurements
  \begin{eqnarray}
    R\cM_{X}^{n}&:=&
    \bigl(nR^\dag \otimes \openone\bigr)\bigl(X \otimes X \otimes \cdots \otimes X \otimes \openone\bigr)\cC_{X}^{n}\bigl(X \otimes X \otimes \cdots \otimes X \otimes \openone\bigr)\bigl(nR \otimes \openone\bigr)\nonumber\\
    &=&\bigl(nR^\dag \otimes \openone\bigr)\bigl(X \otimes X \otimes \cdots \otimes X \otimes \openone\bigr)\bigl(nR \otimes \openone\bigr)R\cC_{X}^{n}\bigl(nR^\dag \otimes \openone\bigr)\bigl(X \otimes X \otimes \cdots \otimes X \otimes \openone\bigr)\bigl(nR \otimes \openone\bigr)\nonumber\\
    &\widehat{=}&
    \bigl(\openone^{\otimes n}-\ket{\phi}\bra{\phi}\bigr)\otimes\openone\,.
  \end{eqnarray}
  Then one can quickly obtain the more general relation
  \begin{equation}
    R\cM_{X}^{n}\otimes\openone^{\otimes(n-1)} \widehat{=} \openone^{\otimes 2n}-nR\cM_X\,.
  \end{equation}
  Note that the effective commutation between the projective measurement $\openone\otimes\ket{0}\bra{0}$
  and the local operations $R^{i}$ and Pauli-$X$, as well as the commutation between
  $\cC_{Xia}$ and $R^{(j)}$($i\neq j$) can help us move the operators around and simplify the
  experimental realizations.

  For example, the QND measurement using {\sc cnot} gates for verifying arbitrary two-qubit pure states
  with the order of Eq.~\eqref{eq:NDQV_2qb_2} in the main text is
  \begin{equation}
    \cM_{i}^{b}
     =\openone^{\otimes4}-\bigl(\openone^{\otimes 2}\otimes\ket{00}\bra{00}\bigr)\bigl(R_i^{\dagger}\otimes\openone^{\otimes 2}\bigr)\cC_{X1a}\cC_{X2a'}\bigl(R_i\otimes \openone^{\otimes 2}\bigr)\,.
  \end{equation}
  It can easily be realized with four steps using two ancilla qubits initialized as \ket{00}:
  two local rotations, two couplings, two local rotations, and two ancilla measurements.
  However, it can also be rewritten as
  \begin{eqnarray}
    \cM_{i}^{b\prime}
     &=&\openone^{\otimes 4}-
     \Bigl[\bigl(\openone\otimes\openone\otimes\ket{0}\bra{0}\otimes\openone\bigr)
     \bigl(R_{i,1}^{\dagger}\otimes\openone\otimes\openone\otimes\openone\bigr)
     \cC_{X1a}
     \bigl(R_{i,1}\otimes\openone\otimes\openone\otimes\openone\bigr)\Bigl]
     \nonumber\\
     &&\qquad
     \Bigl[\bigl(\openone\otimes\openone\otimes\openone\otimes\ket{0}\bra{0}\bigr)
     \bigl(\openone\otimes R_{i,2}^{\dagger}\otimes\openone\otimes\openone\bigr)
     \cC_{X2a'}
     \bigl(\openone\otimes R_{i,2}\otimes \openone\otimes\openone\bigr)\Bigl]\,,
  \end{eqnarray}
  where $R_i=R_{i,1}\otimes R_{i,2}$.
  This order is in favor of modular designs that can be constructed by the same blocks on the two qubits $(1,a)$ or $(2,a')$
  with four steps: one local rotation, one coupling between the two qubits, one local rotation,
  and one ancilla measurement.
  Such blocks can be conveniently extended for the verification of other multipartite entangled states.
\end{proof}

\section{Appendix G: Verification of GHZ states}
Take the three-qubit GHZ state $\ket{\mbox{GHZ}_3}=(\ket{000}+\ket{111})/\sqrt{2}$,
where one set of the stabilizer generators are $\{XXX,Z\openone Z,ZZ\openone\}$, as an example.
We can construct the sequential NDQV protocol as $\cM=\cM_1 \cM_2 \cM_3$ (or arbitrary permutations
of the three measurement settings) with
\begin{equation}
\begin{aligned}
  \cM_1&=\bigl(\openone\otimes\ket{0}\bra{0}\bigr)\bigl[(H\otimes H\otimes H\otimes \openone)\cC_{X1a}\cC_{X2a}\cC_{X3a}(H\otimes H\otimes H\otimes \openone)\bigr]\,,\\
  \cM_2&=\bigl(\openone\otimes\ket{0}\bra{0}\bigr)\bigl[\cC_{X1a}\cC_{X3a}\bigr]\,,\\
  \cM_3&=\bigl(\openone\otimes\ket{0}\bra{0}\bigr)\bigl[\cC_{X1a}\cC_{X2a}\bigr]\,.
\end{aligned}
\end{equation}
Then, direct calculations can prove that $\cM$ is equivalent to the optimal global strategy.

\end{document}